\newtheorem{defn}{Definition}
\newtheorem{thm}{Theorem}[section]
\newtheorem{cor}[thm]{Corollary}
\newtheorem{prop}{Proposition}
\newtheorem{lem}[thm]{Lemma}
\newtheorem{conj}[thm]{Conjecture}
\newtheorem{constr}[thm]{Construction}
\newtheorem{note}{Remark}
\newcommand{\bit}{\begin{itemize}}
\newcommand{\eit}{\end{itemize}}
\newcommand{\bcor}{\begin{cor}}
\newcommand{\ecor}{\end{cor}}
\newcommand{\beq}{\begin{equation}}
\newcommand{\eeq}{\end{equation}}
\newcommand{\beqn}{\begin{equation*}}
\newcommand{\eeqn}{\end{equation*}}
\newcommand{\bea}{\begin{eqnarray}}
\newcommand{\eea}{\end{eqnarray}}
\newcommand{\bean}{\begin{eqnarray*}}
\newcommand{\eean}{\end{eqnarray*}}
\newcommand{\ben}{\begin{enumerate}}
\newcommand{\een}{\end{enumerate}}
\newcommand{\bdefn}{\begin{defn}}
\newcommand{\edefn}{\end{defn}}
\newcommand{\bnote}{\begin{note}}
\newcommand{\enote}{\end{note}}
\newcommand{\bprop}{\begin{prop}}
\newcommand{\eprop}{\end{prop}}
\newcommand{\blem}{\begin{lem}}
\newcommand{\elem}{\end{lem}}
\newcommand{\bthm}{\begin{thm}}
\newcommand{\ethm}{\end{thm}}
\newcommand{\bconj}{\begin{conj}}
\newcommand{\econj}{\end{conj}}
\newcommand{\bconstr}{\begin{constr}}
\newcommand{\econstr}{\end{constr}}
\newcommand{\bpf}{\begin{proof}}
\newcommand{\epf}{\end{proof}}
\newcommand{\uv}{{\bf u}}
\newcommand{\vv}{{\bf v}}
\newcommand{\Um}{{\bf U}}
\newcommand{\Vm}{{\bf V}}
\newcommand{\Bc}{{\cal B}}
\newcommand{\Cc}{{\cal C}}
\newcommand{\Dc}{{\cal D}}
\newcommand{\Kc}{{\cal K}}
\newcommand{\Nc}{{\cal N}}
\newcommand{\Xc}{{\cal X}}
\newcommand{\field}[1]{\mathbb{#1}}
\newcommand{\F}{\field{F}}
\newcommand{\cC}{{\cal C}}
\begin{document}
\title{Explicit MBR All-Symbol Locality Codes}


\author{Govinda M. Kamath, Natalia Silberstein, N. Prakash, Ankit S. Rawat, \\ V. Lalitha,  O. Ozan Koyluoglu, P. Vijay Kumar, and Sriram Vishwanath
\thanks{Govinda M. Kamath,  N. Prakash, V. Lalitha and P. Vijay Kumar are with the Department of ECE, Indian Institute of Science, Bangalore, India (email: \{govinda,  prakashn, lalitha, vijay\}@ece.iisc.ernet.in).
Natalia Silberstein, Ankit S. Rawat, O. Ozan Koyluoglu, and Sriram Vishwanath are with Department of ECE, University of Texas at Austin, Austin, USA (email: \{natalys, ankitsr, ozan, sriram\}@austin.utexas.edu). This research is supported in part by the National Science Foundation under Grant 0964507 and in part by the NetApp Faculty Fellowship program. The work of V. Lalitha is supported by a TCS Research Scholarship.
} }

\maketitle

\begin{abstract}
Node failures are inevitable in distributed storage systems (DSS). To
enable efficient repair when faced with such failures, two main techniques are known: Regenerating codes, i.e., codes that
minimize the total repair bandwidth; and codes with locality, which minimize the number of nodes participating in the repair process.
This paper focuses on regenerating codes with locality, using pre-coding based on Gabidulin codes, and presents constructions
that utilize minimum bandwidth regenerating (MBR) local codes. The constructions
achieve maximum resilience (i.e., optimal minimum distance) and have maximum capacity (i.e., maximum rate). Finally, the same
pre-coding mechanism can be combined with a subclass of fractional-repetition codes to enable maximum resilience and
repair-by-transfer simultaneously.
\end{abstract}

\section{Background}
\label{sec:background}

\subsection{Vector Codes}

An $[n, K, d_{\text{min}}, \alpha]$ \textit{vector code} over a field $\mathbb{F}_q$ is a code ${\mathcal{C}}$ of block length $n$, having a symbol alphabet $\mathbb{F}_q^{\alpha}$ for some $\alpha>1$, satisfying the additional property that given $\mathbf{c}, \mathbf{c}' \in \mathcal{C}$ and $a,b \in \mathbb{F}_q$, $a\mathbf{c} + b\mathbf{c}'$ also belongs to $\mathcal{C}$.  As a vector space over $\mathbb{F}_q$, $\mathcal{C}$ has dimension $K$, termed the scalar dimension (equivalently, the file size) of the code and as a code over the alphabet $\mathbb{F}_q^{\alpha}$, the code has minimum distance $d_{\min}$.

Associated with the vector code $\mathcal{C}$ is an $\mathbb{F}_q$-linear scalar code $\mathcal{C}^{(s)}$ of length $N= n\alpha$, where $\mathcal{C}^{(s)}$ is obtained by expanding each vector symbol within a codeword into $\alpha$ scalar symbols (in some prescribed order).   Given  a generator matrix $G$ for the scalar code ${\cal C}^{(s)}$, the first code symbol in the vector code is naturally associated with the first $\alpha$ columns of $G$ etc.  We will refer to the collection of $\alpha$ columns of $G$ associated with the $i^{\text{th}}$ code symbol ${\bf c}_i$ as the $i^{\text{th}}$ thick column and to avoid confusion, the columns of $G$ themselves as thin columns.

\subsection{Locality in Vector Codes} \label{sec:locality_basics}

Let $\mathcal{C}$ be an $[n, K, d_{\text{min}}, \alpha]$ vector code over a field $\mathbb{F}_q$, possessing a $(K \times n\alpha)$ generator matrix $G$.  The $i^{\text{th}}$ code symbol, ${\bf c}_i$, is said to have  $(r, \delta)$ locality, $\delta \geq 2$, if there exists a punctured code $\Cc_i:=\Cc|_{S_i}$ of
${\mathcal C}$ (called a \emph{local code}) with support $S_i\subseteq \{1, 2, \cdots, n\}$ such that
\begin{itemize}
\item $i \in S_i$,
\item $|S_i| \leq n_L := r + \delta - 1$, and
\item $d_{\text{min}}\left(\mathcal{C}|_{S_i}\right) \geq \delta$.
\end{itemize}

The code $\mathcal{C}$ is said to have $(r,\delta)$ \emph{information locality} if there exists $l$ code symbols with $(r, \delta)$ locality and respective support sets $\{S_i\}_{i=1}^{l}$ satisfying
\begin{itemize}
\item  $\displaystyle\text{Rank}(G|_{\cup_{i=1}^{l}S_i})=K$.
\end{itemize}
The code $\mathcal{C}$ is said to have $(r,\delta)$ \emph{all-symbol locality} if all code symbols have $(r,\delta)$ locality. A code with $(r,\delta)$ information (respectively, all-symbol) locality is said to have \emph{full} $(r,\delta)$ information (respectively, all-symbol) locality, if all local codes have parameters given by $|S_i| = r + \delta - 1 $ and $d_{\text{min}}\left(\mathcal{C}_i\right) = \delta$, for $i=1, \cdots, l$.


The concept of locality for scalar codes, with $\delta=2$, was introduced in \cite{GopHuaSimYek} and extended  in \cite{PraKamLalKum} and \cite{PapDim} to scalar codes with arbitrary $\delta$, and vector codes with $\delta=2$, respectively.  This was further extended to vector codes with arbitrary $\delta$ in \cite{KamPraLalKum} and \cite{RawKoySilVis}, where, in addition to constructions of vector codes with locality, authors derive minimum distance upper bounds and also consider settings in which the local codes have regeneration properties.

Consider now a vector code $\Cc$ with full $(r,\delta)$
locality whose associated local codes $\Cc_i$ have parameters $[n_L,K_L,\delta]$. In this paper, we are interested in local codes that have the uniform rank accumulation property, in particular, local MBR codes and local fractional-repetition codes.
\begin{defn}[Uniform rank accumulation (URA) codes]
Let $G$ be a generator matrix for a code $\mathcal{C}$, and $S_i$ be an arbitrary subset of $i$ thick columns of $G$, for some $i=1,\cdots,n$. Then, $\Cc$ is an URA code, if the restriction $G |_{S_i}$ of $G$ to $S_i$, has rank $\rho_i$ that is independent of the specific subset $S_i$ of $i$ indices chosen and given by
$ \rho_i  =  \sum_{j=1}^{i} a_j$
for some set of non-negative integers $\{a_j\}$.
\end{defn}
We will refer to the sequence $\{a_i, 1 \leq i \leq n\}$ as the \emph{rank accumulation profile} of the code $\mathcal{C}$.

We now present the minimum distance upper bound given in~\cite{KamPraLalKum} for the case when local codes are URA codes.
Consider the finite length vector $(a_1,a_2,\cdots,a_{n_L})$, and its extension to a periodic semi-infinite sequence $\{a_i\}_{i=1}^{\infty}$ of period $n_L$ by defining
$a_{i+j n_L}  =  a_{i}, \ 1 \leq i \leq n_L, \  j \geq 1.$
Let $P(\cdot)$ denote the sequence of partial sums,
\bea
P(s) & = & \sum_{i=1}^s a_i \label{eq:P_fun}, \ \ s\geq 1.
\eea
Then, given integers $ u_1 \geq 0, \ \ 1 \leq u_0 \leq n_L $,
$P(u_1n_{L}+u_0)  =  u_1K_L+P(u_0).$
Next, let us  define the function $P^{(\text{inv})}$ by setting $P^{(\text{inv})}(\nu)$, for $\nu \geq 1 $, to be the smallest integer $s$ such
that $P(s) \geq \nu$.
It can be verified that for $ v_1\geq 0$ and $1 \leq v_0 \leq K_L$,
\bean
P^{(\text{inv})}(v_1K_L+v_0)  & = & v_1n_L+P^{(\text{inv})}(v_0),
\eean
where $P^{(\text{inv})}(v_0) \leq r$ as $1 \leq v_0 \leq K_L$.

The minimum distance of a code $\Cc$ whose local codes $\Cc_i$ are URA codes can be bounded as follows.
\begin{thm}[Theorem 5.1 of \cite{KamPraLalKum}] \label{thm:URA_bound}
The minimum distance of ${\cal C}$ is upper bounded by
\bea \label{eq:dmin_Pinv_bound}
d_{\min} & \leq & n-P^{(\text{inv})}(K)+1.
\eea
\end{thm}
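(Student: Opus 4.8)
The plan is to adapt the standard Singleton-type argument of Gopalan et al., but using the URA structure of the local codes to control how rank accumulates as we add thick columns. Concretely, I would construct a subset $T \subseteq \{1,\dots,n\}$ of code symbols such that $\mathrm{Rank}(G|_T) < K$ while $|T|$ is as large as possible; then any codeword supported on the complement is nonzero whenever the complement is nonempty, which via the usual dual formulation gives $d_{\min} \le n - |T|$. The key quantitative claim I would establish is that $|T|$ can be taken to be $P^{(\mathrm{inv})}(K) - 1$, which immediately yields the stated bound $d_{\min} \le n - P^{(\mathrm{inv})}(K) + 1$.

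The main step is the greedy/inductive construction of $T$. I would build $T$ by adding local codes one at a time. Start with an empty set and repeatedly pick a local code $\Cc_i$ (support $S_i$, parameters $[n_L,K_L,\delta]$, rank accumulation profile $(a_1,\dots,a_{n_L})$) and adjoin as many of its thick columns as possible without letting the total rank reach $K$. Because each $\Cc_i$ is URA, adjoining the first $s$ fresh columns of a local code contributes exactly $P(s) = \sum_{j=1}^s a_j$ to the rank \emph{if those columns are rank-independent of what came before}; in general one gets at most that much, and the worst case for the bound is when the increments are exactly $P(s)$. The bookkeeping I would carry out: after consuming $t$ full local codes plus a partial $(t+1)$-st contributing $P(u_0)$ with $1 \le u_0 \le n_L$, the accumulated rank is $tK_L + P(u_0)$ and the number of thick columns used is $t n_L + u_0$. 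I stop just before the rank would reach $K$; using the periodic-extension identities $P(u_1 n_L + u_0) = u_1 K_L + P(u_0)$ and $P^{(\mathrm{inv})}(v_1 K_L + v_0) = v_1 n_L + P^{(\mathrm{inv})}(v_0)$ recorded in the excerpt, the largest number of thick columns attainable with rank $< K$ is exactly $P^{(\mathrm{inv})}(K) - 1$. Writing $K = v_1 K_L + v_0$ with $1 \le v_0 \le K_L$, this is $v_1 n_L + (P^{(\mathrm{inv})}(v_0) - 1)$: the $v_1$ full local codes each use $n_L$ thick columns, and the partial code contributes $P^{(\mathrm{inv})}(v_0) - 1$ columns, whose rank $P(P^{(\mathrm{inv})}(v_0) - 1) < v_0 \le K_L$ keeps the total rank below $K$.

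I then need the covering/feasibility point: that we can actually find enough local codes with sufficiently many fresh (previously uncounted) thick columns to realize this $T$. Here I would invoke the full $(r,\delta)$ all-symbol (or information) locality assumption — every relevant symbol lies in some local code of the stated parameters — and argue that at each greedy step either we can enlarge $T$ by a fresh thick column keeping rank below $K$, in which case we continue, or we cannot, in which case $\mathrm{Rank}(G|_T)$ has already reached $K$ and we are in the extremal configuration above. The subtle bookkeeping is handling overlaps between the supports $S_i$ and making sure the "at most $P(s)$" rank increment is applied with the right value of $s$ (the number of genuinely new columns), rather than the full $n_L$; this is exactly where the URA hypothesis does the work, since without it the increments are not pinned down. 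Finally, from $\mathrm{Rank}(G|_T) < K$ with $|T| = P^{(\mathrm{inv})}(K)-1$, a nonzero codeword vanishing on $\{1,\dots,n\}\setminus T$ exists (the row space of $G$ restricted to $T$ is a proper subspace, so some nonzero linear combination of rows annihilates those columns — more precisely one passes to the parity-check / dual picture as in \cite{GopHuaSimYek,PraKamLalKum}), giving a codeword of weight at most $n - |T| = n - P^{(\mathrm{inv})}(K) + 1$, which is the claimed bound.

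The hard part is the overlap bookkeeping in the greedy construction: ensuring that when local supports intersect, the rank increments telescope correctly and that the extremal count $P^{(\mathrm{inv})}(K)-1$ is both achievable and not exceeded. The periodicity identities for $P$ and $P^{(\mathrm{inv})}$ quoted above are precisely the algebraic tools that make this telescoping clean once the combinatorial setup is in place.
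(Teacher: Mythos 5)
First, a point of reference: this paper does not prove Theorem~\ref{thm:URA_bound} at all --- it quotes it from Theorem~5.1 of \cite{KamPraLalKum} --- so your proposal must stand as a self-contained proof, and as such it has a genuine gap. The reduction you set up is the right and standard one: exhibit $T\subseteq\{1,\dots,n\}$ with $\mathrm{Rank}(G|_T)\le K-1$ and $|T|\ge P^{(\mathrm{inv})}(K)-1$; then some nonzero message is annihilated by $G|_T$, giving a nonzero codeword vanishing on $T$ (on $T$, not on its complement as you wrote) of thick-symbol weight at most $n-|T|$. The gap is the core quantitative claim that such a $T$ of size $P^{(\mathrm{inv})}(K)-1$ exists. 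Your justification --- ``the worst case for the bound is when the increments are exactly $P(s)$'' --- is not valid for an arbitrary ``adjoin as many columns as possible'' greedy. It is true that $s$ fresh columns taken from a single local code raise the rank by at most $P(s)$, but the periodically extended $P$ is subadditive ($P(s_1)+P(s_2)\ge P(s_1+s_2)$, because $a_1\ge a_2\ge\cdots$), so bounding each batch by $P(s_j)$ does not bound the total rank by $P(|T|)$: a greedy that picks up a few fresh columns from each of many overlapping local codes can accumulate rank at up to $a_1=\alpha$ per fresh column and, for all your accounting shows, reach rank $K$ before reaching $P^{(\mathrm{inv})}(K)-1$ columns. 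Your tally (``$t$ full codes contribute $tK_L$ rank and $tn_L$ columns, plus a partial code contributing $P(u_0)$ and $u_0$'') is derived only in the disjoint-support, independent-span situation; the overlap case, which you explicitly flag and defer as ``subtle bookkeeping,'' is precisely the content of the theorem.

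What closes the gap is a more structured algorithm together with a specific use of URA, in the spirit of \cite{GopHuaSimYek,PraKamLalKum,KamPraLalKum}: adjoin \emph{entire} local supports as long as the rank stays below $K$, and only in the final step add columns of one last local code one at a time. Then: (i) URA combined with submodularity of the column-rank function forces $a_1\ge a_2\ge\cdots\ge a_{n_L}$; (ii) submodularity also shows that adjoining a full local support overlapping the current set in $p$ columns increases the rank by at most $K_L-P(p)$, a \emph{tail} sum of the profile, while contributing $n_L-p$ fresh columns, whereas your per-batch bound $P(s)$ is the (much larger) head sum; (iii) at termination, adding one more column of the last local code would push the rank to at least $K$, which pins $\mathrm{Rank}(T)\ge K-a_j$ for the appropriate position $j$ within that code; and (iv) because the profile is non-increasing, every window of $m$ consecutive positions of the periodic profile has sum at least the tail sum $K_L-P(n_L-m)$ and at most $P(m)$, and combining (i)--(iv) yields $P(|T|+1)\ge K$, i.e., $|T|\ge P^{(\mathrm{inv})}(K)-1$. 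None of these steps appears in your sketch, and your unspecified greedy is not guaranteed to have these properties; asserting that it terminates in the ``extremal configuration'' is exactly the statement that needs proof.
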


The codes achieving the bound in \eqref{eq:dmin_Pinv_bound} are referred to
as codes having optimal locality. For such locality optimal codes, one can
then analyze whether the code allows for efficient data storage in DSS.
Towards this end, file size bound for codes with locality are given in~\cite{RawKoySilVis}
using the min-cut techniques similar to that of~\cite{PapDim}. As noted in~\cite{KamPraLalKum},
when URA codes are used as local codes, the file size bound for $d_{\min}$-optimal codes
can be represented in the form
\begin{eqnarray}
K &\leq&  P(n-d_{\min}+1) \nonumber \\
&=& \left(\left\lceil \frac{n-d_{\min}+1}{n_L} \right\rceil - 1 \right)K_L+ P(l_0) \label{eq:FileSizeBound-Local},
\end{eqnarray}
where $l_0\in\{1,\cdots, n_L\}$ is such that
$$n-d_{\min}+ 1 = \left(\left\lceil \frac{n-d_{\min}+1}{n_L} \right\rceil - 1 \right)n_L + l_0.$$
We note that $P(l_0)=P(r)$, for $r\leq l_0$.

\subsection{MBR Codes}

An $((n,k,d), (\alpha,\beta), K)$ \emph{minimum-bandwidth regenerating} (MBR) code is an $[n,K,d_{\min}=n-k+1,\alpha]$ vector code satisfying additional constraints described below. The code is intended to be used in a distributed storage network in which each code symbol is stored within a distinct node.  The code is structured in such a way that the entire file can be recovered by processing the contents of any $k$, $1 \leq k \leq n$ nodes. Further, in case of a single node failure, the replacement node can reconstruct the data stored in the failed node by connecting to any $d$, $k \leq d \leq n-1$,  nodes and downloading $\beta=\frac{\alpha}{d}$ symbols from each node.  The scalar dimension (or file size) parameter $K$ can be expressed in terms of the other parameters as:
\bean
K & = & \left( dk-{k \choose 2} \right) \beta,
\eean
as proved in~\cite{DimGodWuWaiRam}.
A cut-set bound derived from network coding shows us that the file size cannot be any larger, and thus, MBR codes are example of regenerating codes that are optimal with respect to file size.  A regenerating code is said to be exact if the replacement of a failed node stores the same data as did the failed node, and functional otherwise. We are concerned here only with exact-repair codes. Constructions of MBR codes for all $k \leq  d =\alpha  < n$ and $\beta=1$ are presented in \cite{RasShaKum_pm}. MBR codes with repair by transfer and $d=n-1$ are presented in \cite{ShaRasKumRam_rbt}.

It can be inferred from the results in~\cite{ShaRasKumRam_rbt} that MBR codes are URA codes. In particular, for an $((n,k,d), (\alpha,\beta),K)$ MBR code, the rank accumulation profile is given by
\bea
a_j =  \left\{ \begin{array}{c c}
              \alpha - (j-1)\beta, & 1 \leq j \leq k \\
              0, & k+1 \leq j \leq n. \label{eq:MBR_rankacc}
	     \end{array} \right.
\eea

\subsection{MBR-Local Codes}

Let ${\cal C}$ be an $[n,K,d_{\min},\alpha]$ vector code with
\bit
\item full $(r, \delta)$-information locality with $\delta \geq 2$, and
\item all of whose associated local codes $\mathcal{C}_i, i \in \mathcal{L}$ are MBR codes
with identical parameters $((n_L=r+\delta-1,r,d), (\alpha,\beta),K_L)$.
\eit
Then, the dimension of each local code is given by
\bea
K_L & = & \sum_{i = 1}^{n_L}a_i \ = \ \alpha r - {r \choose 2} \beta,
\eea
where $\{a_i, 1 \leq i \leq n_L\}$ is the rank accumulation profile of the MBR code $\mathcal{C}$.

\subsubsection{Minimum distance bound for MBR-Local Codes}

As MBR codes are URA codes, from Theorem~\ref{thm:URA_bound}, we have
\begin{equation}
d_{\min} \leq n-P^{(\text{inv})}(K)+1, \label{eq:dmin-MBR-Local}
\end{equation}
where, for MBR codes we have
\begin{equation}
P^{(\text{inv})} (v_1K_L + v_0) =  v_1n_L + \nu
\end{equation}
for some $v_1 \geq 0$, $1 \leq v_0 \leq K_L$, and
$\nu$ is uniquely determined from
$ \alpha (\nu-1)  - {\nu-1 \choose 2}\beta < v_0 \leq \alpha \nu  - {\nu \choose 2}\beta$.

\subsubsection{File size bound for MBR-Local Codes}

From \eqref{eq:FileSizeBound-Local}, the file size bound for an optimal locality code with MBR local codes is given by
\begin{eqnarray}
K \leq \left(\left\lceil \frac{n-d_{\min}+1}{n_L} \right\rceil - 1 \right)K_L+ \alpha \mu - {\mu \choose 2} \beta \label{eq:FileSizeBound-MBR},
\end{eqnarray}
where $\mu=\min\{l_0,r\}$ with $l_0$ as defined in Subsection~\ref{sec:locality_basics}. Note that \eqref{eq:FileSizeBound-MBR} follows from the rank accumulation profile of MBR codes, i.e., from \eqref{eq:MBR_rankacc}.

\subsection{Linearized Polynomials}

 A polynomial $f(x)$ over the field $\mathbb{F}_{q^m}$, is said to be {\em linearized} of $q$-degree $t$, if
 \begin{equation}
  f(x)= \displaystyle\sum_{i=0}^{t} u_i x^{q^i} \ , \ u_i \in \mathbb{F}_{q^m}, \ u_t \neq 0.
 \end{equation}

A linearized polynomial $f(x)$ over $\mathbb{F}_{q^m}$ satisfies the following property~\cite{MacSlo}:
\begin{eqnarray}
 f(\lambda_1 \theta_1+  \lambda_2 \theta_2) &=& \lambda_1f(\theta_1) + \lambda_2f(\theta_2) \nonumber \\
  & & \forall \ \theta_1, \theta_2 \in \mathbb{F}_{q^m}, \  \lambda_1 ,\lambda_2 \in \mathbb{F}_q . \label{eq:linearity}
\end{eqnarray}

A linearized polynomial $f(x)$ over $\mathbb{F}_{q^m}$ of $q$-degree $t$, $m > t$, is uniquely determined from its evaluation at a set of ${(t+1)}$ points $g_1, \cdots, g_{t+1} \in \mathbb{F}_{q^m}$, that  are linearly independent over $\F_q$.

\subsection{Gabidulin Maximum Rank Distance Codes}

Now, we present a construction of maximum rank distance codes, provided by Gabidulin in~\cite{Gab}. This codes can be viewed as a rank-metric analog of Reed-Solomon codes.


The \emph{rank} of a vector $\vv\in\F_{q^m}^{\Nc}$, denoted by $\rm{rank}(\vv)$ is defined as the rank of the $m\times \Nc$ matrix $\Vm$ over $\F_q$, obtained by expansion of every entry of $\vv$ to a column vector in $\F_q^m$, based on the isomorphism between $\F_{q^m}$ and $\F_{q}^m$. Similarly, for two vectors $\vv,\uv \in \F_{q^m}^{\Nc}$, the \emph{rank distance} is defined by $d_R(\vv,\uv)=\rm{rank}(\Vm - \Um)$.

An $[\Nc,\Kc,\Dc]_{q^m}$ \textmd{rank-metric code} $\cC\subseteq\F_{q^m}^{\Nc}$ is a linear block code over $\F_{q^m}$ of length $\Nc$, dimension $\Kc$  and minimum rank distance $\Dc$. A rank-metric code that attains the Singleton bound $\Dc\leq \Nc-\Kc+1$ in rank-metric is called a \emph{maximum rank distance} (MRD) code.
For $m\geq \Nc$, a construction of MRD codes, called Gabidulin codes is given as follows~\cite{Gab}.

A codeword in an $[\Nc,\Kc,\Dc=\Nc-\Kc+1]_{q^m}$ Gabidulin code
$\cC^{\rm{Gab}}$, $m\geq \Nc$, is defined as 
\begin{equation}
\mathbf{c} = (f(\theta_1),f(\theta_2),\ldots, f(\theta_{\Nc})) \ \in \ \F_{q^m}^{\Nc},
\end{equation} 
where $f(x)$ is a
linearized polynomial over $\F_{q^m}$ of $q$-degree at most $\Kc-1$ with the coefficients given by the information message, and where the $\theta_1,\ldots, \theta_{\Nc}\in \F_{q^m}$ are linearly independent over $\F_q$~\cite{Gab}.

\section{Construction of Codes with MBR Locality} \label{sec:MBR_locality_code_constn}

In this section, we will present two constructions of codes with local regeneration.  In both cases, the local codes are MBR codes with identical parameters and both codes are optimal, i.e., they achieve the upper bound of Theorem~\ref{thm:URA_bound} on minimum distance.  The first construction is an all-symbol locality construction, while the second has information locality.

The constructions presented in this paper, adopt the linearized polynomial approach made use of in \cite{OggDat,SilRawVis,RawKoySilVis}. In particular, similar to the constructions proposed in \cite{SilRawVis,RawKoySilVis}, the constructions of this paper have a two-step encoding process with the first step utilizing Gabidulin codes, which in turn, are based on linearized polynomials. The first code construction given below also proves the tightness of the bound on minimum distance of  codes with URA derived in \cite{KamPraLalKum} (Theorem 5.1) for the case when $K_L\nmid K$, where $K_L$ is the scalar dimension of the local MBR code.

Consider a code ${\cal C}_{\text{\tiny BASIC}}$ that is simply the concatenation of $t$ local MBR codes having identical parameters $((n_L,k,d), (\alpha,\beta),K_L)$.   Thus a typical codeword ${\bf c} \in {\cal C}_{\text{\tiny BASIC}}$ looks like
\bean
{\bf c} & = & \left( \begin{array}{cccc} {\bf c}_1^{\rm mbr} &  {\bf c}_2^{\rm mbr} & \cdots &  {\bf c}_t^{\rm mbr} \end{array} \right ),
\eean
where each vector ${\bf c}_i^{\rm mbr}$ is a codeword belonging to the MBR code.
The generator matrix $G_{\text{\tiny BASIC}}$ of the code will clearly have a block-diagonal structure. It is straightforward to show that the smallest number $\rho$, such that any $\rho$ thick columns of $G_{\text{\tiny BASIC}}$ have rank $\geq K$ is given by $P^{(\text{inv})}(K)$, for any $1 \leq K \leq t K_L$.

\vspace*{0.05in}

\begin{constr}\label{constr:all_symbol_MBR}  We will describe the construction by showing how encoding of a message vector takes place. The encoding is illustrated in Fig.~\ref{fig:mbr_allsymbol}. Given the message vector ${\bf u} \in \mathbb{F}_{q^m}^K$, we first encode ${\bf u}$ to a $tK_L$ long Gabidulin codeword using $tK_L$ linearly independent points (over $\mathbb{F}_{q}$) $\{\theta_{1}, \theta_{2},\ldots, \theta_{tK_L}\} \subset \mathbb{F}_{q^m}$, i.e., by applying an $[tK_L,K, tK_L-K+1]_{q^m}$ Gabidulin code, assuming $m \geq tK_L$. We then partition $tK_L$ symbols of the Gabidulin codeword, $(f(\theta_1), f(\theta_2),\ldots, f(\theta_{tK_L}))$, into $t$ disjoint sets of $K_L$ symbols each. Each of these sets is then fed in as a message vector to a bank of $t$ identical MBR encoders whose outputs constitute $((n_L,r,d),(\alpha,\beta),K_L)$ MBR codes.  If $\left\{ {\bf c}_i^{\rm mbr} \mid i=1,2,\ldots,t\right\}$ is the resulting set of $t$ codewords, these codewords are then concatenated to obtain the desired codeword ${\bf c}$. The code $\mathcal{C}$ thus constructed has:
\begin{itemize}
 \item length $n=tn_L$
 \item $t$ local $((n_L,r,d),(\alpha,\beta), K_L)$  MBR  codes with disjoint supports
 \item full $(r,\delta)$ all-symbol locality where $\delta$ is defined from $n_L=r+\delta-1$.
\end{itemize}
\end{constr}

\begin{figure*}[ht]
\begin{center}
\includegraphics[height=2in]{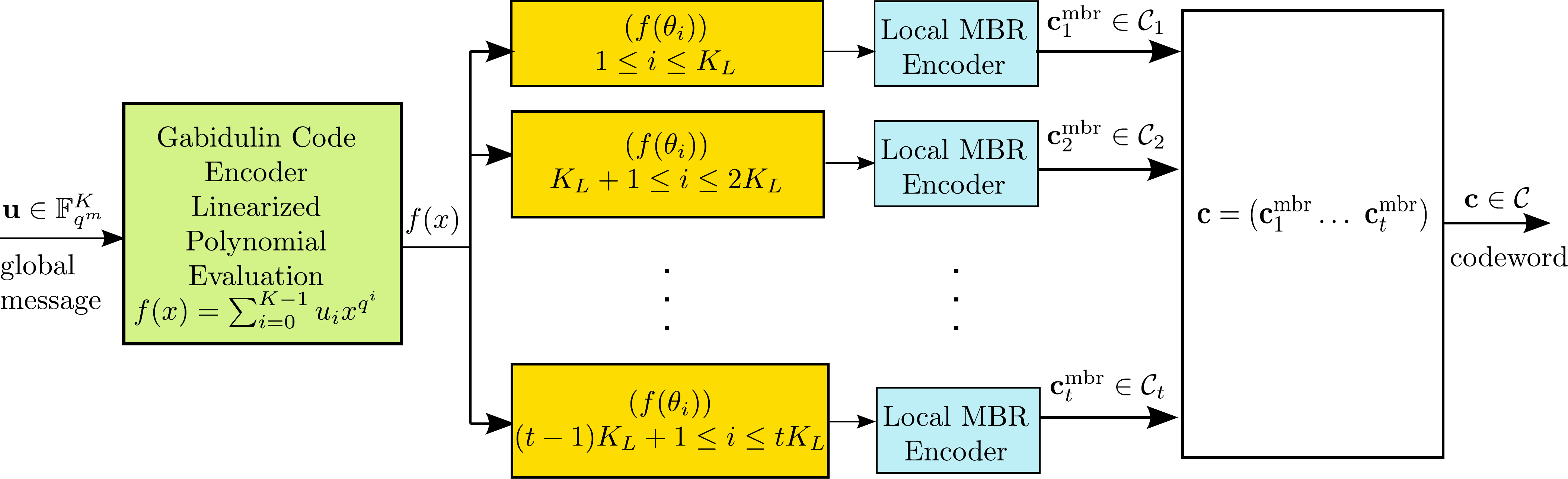}
\caption{Illustrating the two-step construction of the all-symbol MBR-local code.}
\label{fig:mbr_allsymbol}
\end{center}
\end{figure*}

\begin{thm}\label{thm:all_symbol_MBR}

 Given any set of parameters $n,r,\delta,K$, such that $n=tn_L$ and $K \leq tK_L$, the construction \ref{constr:all_symbol_MBR}, yields an optimal MBR-local code with full $(r,\delta)$ all-symbol locality whose minimum distance is given by
\bean
d_{\min} =  n \ - \ P^{\text{(inv)}} (K) +1.
\eean
\end{thm}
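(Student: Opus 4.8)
The plan is to dispose of the locality and optimality claims quickly and then concentrate on pinning down $d_{\min}$. By construction each of the $t$ local codes produced by Construction~\ref{constr:all_symbol_MBR} is an $((n_L,r,d),(\alpha,\beta),K_L)$ MBR code, hence an $[n_L,K_L,n_L-r+1,\alpha]$ vector code with $n_L-r+1=\delta$; as these $t$ codes have disjoint supports of size $n_L=r+\delta-1$ whose union is all of $[n]$, every code symbol of $\mathcal{C}$ lies in a local code attaining $|S_i|=r+\delta-1$ and $d_{\min}(\mathcal{C}_i)=\delta$, so $\mathcal{C}$ has full $(r,\delta)$ all-symbol locality. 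Since MBR codes are URA codes with the profile \eqref{eq:MBR_rankacc}, Theorem~\ref{thm:URA_bound} already gives $d_{\min}\le n-P^{(\text{inv})}(K)+1$, so it remains only to prove the matching lower bound; once that is done, optimality (in the sense of Theorem~\ref{thm:URA_bound}) is automatic.

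For the lower bound I would use the standard fact that $d_{\min}(\mathcal{C})\ge n-\rho+1$ iff every set $S$ of $\rho$ thick columns of a generator matrix of $\mathcal{C}$ has rank $K$, i.e.\ iff the message $\mathbf{u}\in\mathbb{F}_{q^m}^K$ is recoverable from the code symbols $\{\mathbf{c}_j:j\in S\}$. Taking $\rho=P^{(\text{inv})}(K)$, fix such an $S$, let $T_i$ and $J_i$ denote respectively the $n_L$-element support and the $K_L$-element index set of Gabidulin evaluations associated with the $i$-th local code, and set $c_i=|S\cap T_i|$, so $\sum_i c_i=\rho$. I would work with the MBR local codes realized over the base field $\mathbb{F}_q$ (permissible once $q$ is large enough, with $m\ge tK_L$ for the Gabidulin step). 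Within local code $i$ the input is the vector $(f(\theta_j))_{j\in J_i}$; by the URA property the thick columns of its $\mathbb{F}_q$-generator matrix indexed by $S\cap T_i$ span an $\mathbb{F}_q$-column space of dimension $P(c_i)$ (note $c_i\le n_L$), so from $\{\mathbf{c}_j:j\in S\cap T_i\}$ one can compute $P(c_i)$ quantities $\sum_{j\in J_i}w_j f(\theta_j)$ whose weight vectors $(w_j)\in\mathbb{F}_q^{K_L}$ are $\mathbb{F}_q$-linearly independent. By the $\mathbb{F}_q$-linearity \eqref{eq:linearity} of $f$, each such quantity equals $f\!\left(\sum_{j\in J_i}w_j\theta_j\right)$, an evaluation of $f$ at a point of $\mathrm{span}_{\mathbb{F}_q}\{\theta_j:j\in J_i\}$, and since the $\theta_j$ are $\mathbb{F}_q$-linearly independent these $P(c_i)$ evaluation points are themselves $\mathbb{F}_q$-linearly independent.

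The final step combines the blocks: because the $J_i$ are pairwise disjoint and $\{\theta_1,\dots,\theta_{tK_L}\}$ is $\mathbb{F}_q$-linearly independent, the evaluation points recovered from distinct local codes remain mutually $\mathbb{F}_q$-linearly independent, giving in total the values of $f$ at $\sum_i P(c_i)$ linearly independent points. Now $\sum_i P(c_i)$ is exactly $\mathrm{rank}(G_{\text{\tiny BASIC}}|_S)$ (the block-diagonal structure makes rank additive over blocks and URA makes block $i$ contribute $P(c_i)$), and since $|S|=P^{(\text{inv})}(K)$ the property of $G_{\text{\tiny BASIC}}$ recorded just before Construction~\ref{constr:all_symbol_MBR} yields $\sum_i P(c_i)\ge K$. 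Picking any $K$ of these $\mathbb{F}_q$-linearly independent evaluation points and using that a linearized polynomial of $q$-degree at most $K-1$ is determined by its values at $K$ such points, one recovers $f$ and hence $\mathbf{u}$; this establishes $d_{\min}\ge n-P^{(\text{inv})}(K)+1$ and completes the proof. The delicate point — and the place where the Gabidulin precoding earns its keep — is the middle paragraph: one must be able to read the recovered data as evaluations of $f$ at honestly $\mathbb{F}_q$-linearly independent points and keep that independence when the $t$ blocks are merged, which is precisely why the local MBR generator matrices are taken over $\mathbb{F}_q$ and the $tK_L$ Gabidulin evaluation points are chosen $\mathbb{F}_q$-linearly independent as a whole.
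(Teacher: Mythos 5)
Your proof is correct and follows essentially the same route as the paper: the paper isolates the key step as a general lemma about concatenating a Gabidulin precode with a vector code (arguing via the product $\Gamma=[\theta_1\,\cdots\,\theta_{tK_L}]G_{\text{\tiny BASIC}}$ that dependences are inherited from $G_{\text{\tiny BASIC}}$, so any $P^{(\text{inv})}(K)$ thick columns determine $f$), whereas you inline that lemma with explicit block-by-block bookkeeping of evaluation points. The mathematical content — linearity of the linearized polynomial converting $\mathbb{F}_q$-combinations of evaluations into evaluations at new independent points, plus the rank property of $G_{\text{\tiny BASIC}}$ — is identical.
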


\vspace*{0.05in}

We first present a useful lemma on codes that are obtained by concatenating a Gabidulin code (over $\mathbb{F}_{q^m}$) with a vector code (over $\mathbb{F}_{q}$).

\begin{lem} \label{lem:useful}
Let $G$ be the generator matrix of an $[n, J, d_{\min},\alpha]$ vector code over the field $\mathbb{F}_q$. Let $\tilde{J}$ be an integer such that $\tilde{J} \leq J$. Let $\rho$ be the smallest integer, such that the submatrix of $G$ obtained by selecting any $\rho$ thick columns of $G$ results in a matrix of rank $\geq \tilde{J}$.  Let
\bean
f(x) = \sum_{i=0}^{\tilde{J}-1}u_ix^{q^i}, u_i \in \mathbb{F}_{q^m}, \ \ m>J ,
\eean
be a linearized polynomial of $q$-degree at most $\tilde{J}-1$ over the extension field $\mathbb{F}_{q^m}$, for $m \geq J$.
Let $\{\theta_i\}_{i=1}^J$ be any collection of $J$ elements of $\mathbb{F}_{q^m}$ that
are linearly independent over $\F_q$.  The mapping
\bean
(u_0,u_1,\cdots, u_{\tilde{J}-1}) & \rightarrow & (f(\theta_1),f(\theta_2), \cdots , f(\theta_J))G
\eean
defines a linear code ${\cal C}$ over $\mathbb{F}_{q^m}$ having message vector $(u_0, u_1,\cdots, u_{\tilde{J}-1})$.  Then ${\cal C}$ has minimum distance $D_{\min}$ given by
\bean
D_{\min} & = & n-\rho+1,
\eean
i.e., $\mathcal{C}$ is an $[n,\tilde{J},D_{\min}]$ code over $\mathbb{F}_{q^m}$.
\end{lem}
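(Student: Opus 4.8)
The plan is to establish the two bounds $D_{\min}\ge n-\rho+1$ and $D_{\min}\le n-\rho+1$ separately, following the Gabidulin-precoding template of \cite{SilRawVis,RawKoySilVis}. The one genuinely code-theoretic input is that evaluating a nonzero linearized polynomial of small $q$-degree at $\mathbb{F}_q$-independent points yields a vector of large rank over $\mathbb{F}_q$; the rest is dimension counting together with the defining property of $\rho$.

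For the lower bound, fix a nonzero message $(u_0,\dots,u_{\tilde J-1})$, write $\mathbf{x}=(f(\theta_1),\dots,f(\theta_J))$ and $\mathbf{c}=\mathbf{x}G$. Let $V=\mathrm{span}_{\mathbb{F}_q}\{\theta_1,\dots,\theta_J\}$, which is $J$-dimensional since the $\theta_i$ are $\mathbb{F}_q$-independent. Because $f$ is nonzero and linearized of $q$-degree at most $\tilde J-1$, its set of roots is an $\mathbb{F}_q$-subspace of $\mathbb{F}_{q^m}$ of dimension at most $\tilde J-1$, so $\ker(f|_V)$ has dimension at most $\tilde J-1$; by the linearity \eqref{eq:linearity}, $\mathrm{span}_{\mathbb{F}_q}\{f(\theta_1),\dots,f(\theta_J)\}=f(V)$ has $\mathbb{F}_q$-dimension at least $J-\tilde J+1$, i.e.\ $\mathrm{rank}(\mathbf{x})\ge J-\tilde J+1$. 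Expanding $\mathbf{x}$ columnwise into an $m\times J$ matrix $X$ over $\mathbb{F}_q$, its right kernel $\ker X\subseteq\mathbb{F}_q^J$ has dimension at most $\tilde J-1$. Now let $T$ be the set of thick columns on which $\mathbf{c}$ vanishes; a short expansion over an $\mathbb{F}_q$-basis of $\mathbb{F}_{q^m}$ shows that every thin column of $G|_T$ lies in $\ker X$, hence $\mathrm{rank}(G|_T)\le \tilde J-1<\tilde J$. By the definition of $\rho$ this is impossible once $|T|\ge\rho$, so $|T|\le\rho-1$, while $\mathbf{c}\neq 0$ since $\mathrm{rank}(G)=J$ and $\mathbf{x}\neq 0$. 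Thus $\mathbf{c}$ has at least $n-\rho+1$ nonzero thick columns, and $D_{\min}\ge n-\rho+1$.

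For the upper bound, minimality of $\rho$ provides a set $T$ of exactly $\rho-1$ thick columns with $\mathrm{rank}(G|_T)\le\tilde J-1$. Let $W\subseteq\mathbb{F}_{q^m}^{J}$ be the image of the $\mathbb{F}_{q^m}$-linear precoding map $(u_0,\dots,u_{\tilde J-1})\mapsto(f(\theta_1),\dots,f(\theta_J))$; it is injective, since the relevant $\tilde J\times\tilde J$ Moore matrix in the $\theta_i$ is nonsingular because $\theta_1,\dots,\theta_{\tilde J}$ are $\mathbb{F}_q$-independent, so $\dim_{\mathbb{F}_{q^m}}W=\tilde J$. Let $N$ be the left null space of $G|_T$ over $\mathbb{F}_{q^m}$, of dimension $J-\mathrm{rank}(G|_T)\ge J-\tilde J+1$. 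Then $\dim_{\mathbb{F}_{q^m}}(W\cap N)\ge \tilde J+(J-\tilde J+1)-J=1$, so some nonzero message yields $\mathbf{x}\in N$, i.e.\ a nonzero codeword $\mathbf{c}=\mathbf{x}G$ vanishing on all of $T$, of thick-column weight at most $n-\rho+1$. Combining the two bounds gives $D_{\min}=n-\rho+1$, and the code obviously has length $n$ and $\mathbb{F}_{q^m}$-dimension $\tilde J$.

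The step I expect to need the most care is the rank estimate $\mathrm{rank}(\mathbf{x})\ge J-\tilde J+1$: it rests on the fact that a nonzero linearized polynomial of $q$-degree $\tilde J-1$ has an $\mathbb{F}_q$-root space of dimension at most $\tilde J-1$ (so $m>\tilde J-1$ is needed for this to be meaningful), together with the translation of ``$\mathbf{c}$ vanishes on the thick columns in $T$'' into ``all those thin columns lie in $\ker X$''. Once these are in place, both directions reduce to the dimension bound $\dim(W\cap N)\ge\dim W+\dim N-J$ and the definition of $\rho$.
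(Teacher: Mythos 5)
Your proof is correct, but it is organized differently from the paper's. The paper's argument exploits the interchange $(f(\theta_1),\ldots,f(\theta_J))G = f\bigl((\theta_1,\ldots,\theta_J)G\bigr)$: it forms the matrix $\Gamma=[\theta_1\ \cdots\ \theta_J]G$ over $\mathbb{F}_q$, notes that since the $\theta_i$-expansion matrix has independent columns the dependence relations among (thin) columns of $\Gamma$ are exactly those of $G$, so any $\rho$ thick columns of $\Gamma$ span $\geq\tilde J$ dimensions over $\mathbb{F}_q$, and then invokes unique determination of a linearized polynomial of $q$-degree $\leq\tilde J-1$ from $\tilde J$ independent evaluation points to conclude that exactly $n-\rho$ erasures are correctable. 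You never commute $f$ past $G$; instead you bound the thick-column weight of every nonzero codeword directly, using the equivalent fact that the root space of a nonzero linearized polynomial of $q$-degree $\leq\tilde J-1$ has $\mathbb{F}_q$-dimension $\leq\tilde J-1$ (so $\operatorname{rank}(\mathbf{x})\geq J-\tilde J+1$ and the zero-set $T$ of the codeword satisfies $\operatorname{rank}(G|_T)\leq\tilde J-1$, forcing $|T|\leq\rho-1$), and then you prove tightness by an explicit construction: intersecting the $\tilde J$-dimensional Gabidulin image $W$ with the left null space $N$ of $G|_T$ over $\mathbb{F}_{q^m}$ for a worst-case set $T$ of $\rho-1$ thick columns. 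The two proofs rest on the same core property of linearized polynomials (interpolation from $\tilde J$ independent points versus root space of dimension $\leq\tilde J-1$ are equivalent), but your version buys an explicit, fully argued converse direction --- the paper's claim that "the maximum number of erasures the code can recover from is $n-\rho$" leaves the existence of an uncorrectable pattern of $n-\rho+1$ erasures (equivalently, a codeword of weight exactly $n-\rho+1$) implicit in the minimality of $\rho$, whereas your $\dim(W\cap N)\geq\tilde J+(J-\tilde J+1)-J=1$ count exhibits it; the price is somewhat more bookkeeping (the expansion of $\mathbf{x}$ into the $m\times J$ matrix $X$ and the observation that rank is preserved under extension of scalars), which the paper's single erasure-recovery argument avoids.
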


\vspace*{0.05in}

\begin{proof}  Since $f(\cdot)$ is linearized, we can interchange linear operations with the operation of evaluation:
\bean
(f(\theta_1) \ f(\theta_2) \ \cdots f(\theta_J)) G & = & f\left( (\theta_1 \ \theta_2 \cdots \theta_J)G\right) .
\eean
We have extended here the definition of $f(\cdot)$ to vectors through termwise application. Consider next, the matrix product
\bean
\Gamma & := & [\theta_1 \ \theta_2 \cdots \theta_J]G.
\eean
In writing this, we have abused notation and identified elements in $\mathbb{F}_{q^m}$ with their representations as vectors over $\mathbb{F}_q$ lying in $\mathbb{F}_q^m$.
The $m \times J$ matrix $[\theta_1 \ \theta_2 \cdots \theta_J]$ on the left has the property that all of its columns are linearly independent.  Hence linear dependence relations amongst columns of $\Gamma$ are precisely those inherited from the matrix $G$.  It follows that $\rho$ is also the smallest number, such that any $\rho$ thick columns of the product matrix $\Gamma$ have rank $\geq \tilde{J}$.  Since $f(\cdot)$ is uniquely determined by its evaluation at a collection of $\tilde{J}$ linearly independent vectors lying in $\mathbb{F}_q^m$, it follows that the maximum number of erasures that the code $\mathcal{C}$ can recover from is given by $n-\rho$.  Then, we have
\bean
D_{\min} & = & n-\rho+1.
\eean
\end{proof}

\vspace*{0.05in}

\begin{proof}  (of Thm.~\ref{thm:all_symbol_MBR})
Let $G_{\text{\tiny BASIC}}$ be the generator matrix of the code that is simply the disjoint union of the $t$ MBR codes.  As
it was explained previously,  the smallest number $\rho$ of thick columns of $G_{\text{\tiny BASIC}}$ such that any $\rho$
columns of $G_{\text{\tiny BASIC}}$ have rank $\geq K$ is given by $P^{(\text{inv})}(K)$.  It follows therefore from
Lemma~\ref{lem:useful} (by substituting $\tilde{J} = K$, $J = tK_L$ and also assuming that $G_{\text{\tiny BASIC}}$ is over
$\mathbb{F}_q$) that the code has minimum distance given by
\bean
d_{\min} & = & n \ - \ P^{\text{(inv)}} (K) +1,
\eean
hence the code attains the bound of Theorem~\ref{thm:URA_bound}, and thus, optimal.
\end{proof}

\begin{note}
 We note that whenever $K= v_1K_L + v_0, \ v_1 \geq 0, \ 1 \leq v_0 \leq K_L$ is such that $v_0 = \nu \alpha - {\nu \choose 2}\beta$ for some $1 \leq \nu \leq r$, then the code constructed by Construction~\ref{constr:all_symbol_MBR} has maximum possible scalar dimension given in~\eqref{eq:FileSizeBound-MBR}. This observation holds for the code we will construct using Construction~\ref{constr:info_MBR} as well.
\end{note}


\begin{constr}\label{constr:info_MBR}
We describe here a method by which we construct a code of length $n=tn_L+\Delta$, with $(r,\delta)$ \emph{information} locality for scalar dimension $K \leq t K_L$.
Given the message vector ${\bf u} \in \mathbb{F}_{q^m}^K$, we first encode ${\bf u}$ to a $tK_L+\Delta\alpha$ long Gabidulin codeword using
a $[tK_L+\Delta\alpha,K,tK_L+\Delta\alpha-K+1]_{q^m}$ Gabidulin code, for $m \geq tK_L+\Delta\alpha$.
We then divide the first $tK_L$ symbols of the Gabidulin codeword into $t$ disjoint groups of equal size and encode each of these $t$ groups using an $((n_L,r,d),(\alpha,\beta),K_L)$ MBR code (similar to the second step of encoding in Construction~\ref{constr:all_symbol_MBR}). This gives us a code of length $tn_L$ with MBR all-symbol locality, whose elements are $\left\{ {\bf c}^{\rm mbr}_i \mid i=1,2,\ldots,t\right\}$.  We then partition the remaining $\Delta\alpha$ symbols of the Gabidulin codeword into $\Delta$ equal sets and denote the $i^\text{th}$ set by ${\bf c}_{tn_L + i}$. The construction outputs $({\bf c}^{\rm mbr}_1,\ldots, {\bf c}^{\rm mbr}_t, {\bf c}_{tn_L + 1},\ldots, {\bf c}_{tn_L + \Delta})$ as a final codeword. The resultant vector code  $\mathcal{C}$ has:
\begin{itemize}
 \item Length $n=tn_L+\Delta$
 \item $t$ local $((n_L,r,d),(\alpha,\beta), K_L)$  MBR  codes with disjoint support
 \item full $(r,\delta)$ information locality
\end{itemize}
\end{constr}

\begin{thm}\label{thm:MBR_general}

Given any set of parameters $n,r,\delta,K$, such that $n=tn_L+\Delta$ and $K \leq tK_L$,  Construction \ref{constr:all_symbol_MBR} results in an optimal MBR-local code with $(r,\delta)$ information locality whose minimum distance is given by
\bean
d_{\min} =  n \ - \ P^{\text{(inv)}} (K) +1.
\eean
\end{thm}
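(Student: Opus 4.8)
The plan is to follow the proof of Theorem~\ref{thm:all_symbol_MBR} almost verbatim, the only new point being that the ``basic'' code now carries $\Delta$ extra uncoded thick columns which must be shown to be irrelevant to the rank-accumulation count because $K\le tK_L$. Let $\tilde G_{\text{\tiny BASIC}}$ be the generator matrix of the $\mathbb{F}_q$-linear vector code obtained as the block-diagonal direct sum of the $t$ local MBR generator matrices together with $\Delta$ copies of the $\alpha\times\alpha$ identity block, one for each trailing thick column ${\bf c}_{tn_L+i}$. Then $\tilde G_{\text{\tiny BASIC}}$ generates an $[n=tn_L+\Delta,\ J=tK_L+\Delta\alpha,\ *,\ \alpha]$ vector code over $\mathbb{F}_q$, and Construction~\ref{constr:info_MBR} is precisely the composition of the $[tK_L+\Delta\alpha,K,\cdot]_{q^m}$ Gabidulin encoder with $\tilde G_{\text{\tiny BASIC}}$ in the sense of Lemma~\ref{lem:useful}, with $\tilde J=K$ and $J=tK_L+\Delta\alpha$.

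The heart of the argument is the claim that the smallest integer $\rho$ for which \emph{every} set of $\rho$ thick columns of $\tilde G_{\text{\tiny BASIC}}$ has rank $\ge K$ equals $P^{(\text{inv})}(K)$, i.e. the $\Delta$ uncoded columns never help the adversary. First I would record two facts about the pure $t$-fold MBR block $G_{\text{\tiny BASIC}}$ of Construction~\ref{constr:all_symbol_MBR}: (i) as already noted before that construction, the least $\rho$ with any $\rho$ of its thick columns of rank $\ge K$ is $P^{(\text{inv})}(K)$ whenever $1\le K\le tK_L$; and, more generally, (ii) any $m$ thick columns of $G_{\text{\tiny BASIC}}$, $0\le m\le tn_L$, have rank at least $P(m)$, with equality for the greedy choice that fills whole MBR blocks before starting a new one. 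Fact (ii) follows from the URA property of MBR codes together with the fact that the profile $\{a_j\}$ of \eqref{eq:MBR_rankacc} is non-increasing within each period, so $P$ is concave on $\{0,\dots,n_L\}$ and extends piecewise-linearly; the rank harvested from $i$ columns of one block is the concave function $P(i)$, a separable sum of concave functions over the allocation polytope is minimized at a vertex, and among vertices the block-filling allocation is optimal. I would also record the elementary inequalities $P(s)-P(s-1)=a_s\le a_1=\alpha$, hence $P(b)-P(a)\le(b-a)\alpha$ for $a\le b$, and $P(s)\le\alpha s$.

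Granting (i) and (ii) the claim is quick. For the lower bound on $\rho$: since $P^{(\text{inv})}(K)-1<P^{(\text{inv})}(K)\le tn_L$ (the latter because $P(tn_L)=tK_L\ge K$), the greedy construction of (ii) produces $P^{(\text{inv})}(K)-1$ thick columns, all inside the MBR blocks, of rank $P(P^{(\text{inv})}(K)-1)\le K-1$. For the upper bound: take any $\rho=P^{(\text{inv})}(K)$ thick columns of $\tilde G_{\text{\tiny BASIC}}$, say $s$ of them in trailing uncoded blocks and the other $\rho-s$ inside MBR blocks; the first contribute rank exactly $\alpha s$, and by (ii) the latter contribute rank at least $P(\rho-s)\ge P(\rho)-\alpha s\ge K-\alpha s$, so the total rank is at least $K$. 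Hence $\rho=P^{(\text{inv})}(K)$ exactly. Applying Lemma~\ref{lem:useful} then gives $d_{\min}(\mathcal C)=n-P^{(\text{inv})}(K)+1$; and since $\mathcal C$ has $(r,\delta)$ information locality with URA (indeed MBR) local codes, Theorem~\ref{thm:URA_bound} shows this meets the upper bound, so $\mathcal C$ is optimal.

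The step I expect to be the main obstacle is fact (ii): making precise that the worst-case rank of $m$ thick columns of a direct sum of MBR codes is $P(m)$, attained by the ``fill complete blocks first'' allocation. Everything else is bookkeeping with the periodic sequence $\{a_j\}$ and its partial-sum function $P$; in particular the only place the hypothesis $K\le tK_L$ is used is to guarantee $P^{(\text{inv})}(K)\le tn_L$, so that the $\Delta$ uncoded columns can be ignored in the tight allocation and supply only the slack $\alpha s$ absorbed above.
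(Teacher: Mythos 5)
Your proposal is correct and follows the same route the paper intends: the paper's proof of this theorem is simply the remark that it ``follows along the same lines'' as Theorem~\ref{thm:all_symbol_MBR}, i.e., apply Lemma~\ref{lem:useful} to the block-diagonal basic generator matrix (now augmented by the $\Delta$ uncoded thick columns) and invoke Theorem~\ref{thm:URA_bound} for optimality. You merely make explicit the detail the paper leaves implicit --- that the worst-case rank of $m$ selected thick columns is still governed by $P(m)$, so the $\Delta$ uncoded columns (each contributing rank exactly $\alpha$, the maximal per-column increment) cannot lower $\rho$ below $P^{(\text{inv})}(K)$ when $K\le tK_L$ --- and your concavity/greedy-allocation argument for this is sound.
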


\begin{proof}
The proof follows along the same lines as the proof of Theorem \ref{thm:all_symbol_MBR}.
\end{proof}

\section{Fractional-Repetition Codes as Local Codes}

In this section, we discuss the usage of fractional repetition (FR) codes as local codes in Constructions~\ref{constr:all_symbol_MBR} and~\ref{constr:info_MBR}.   FR codes can be viewed as a generalization of repair-by-transfer MBR codes, where a  repair process is \emph{uncoded} and \emph{table-based}, i.e., FR codes have a "repair-by-transfer" property, while only specific sets of nodes of size $d$ participate in a node repair process.  For the sake of completeness, we provide an overview of the $t$-design-based construction for FR codes presented in~\cite{ElrRam}\footnote{The construction in \cite{ElrRam} sets $t=2$ and $\lambda=1$; and the corresponding codes are called transposed codes.}.


Let $t,n,w,\lambda$ be integers with $n>w\geq t$ and $\lambda>0$.
A $t$-$(n,w,\lambda)$ design is a collection $\Bc$ of $w$-subsets (the \emph{blocks}), of an $n$-set $\Xc$ (the \emph{points}), such that every $t$-subset of $\Xc$ is contained in exactly $\lambda$ blocks. Let $x_1,\ldots, x_t\in\Xc$ be a set of $t$ points. We denote by $\lambda_s$ the number of blocks containing $x_1,\ldots,x_s$, $1\leq s\leq t$. Then,
\[
\lambda_{s}  =  \lambda \frac{{n-s \choose t-s}}{{w-s \choose t-s}};
\]
the  number of blocks in the $t$-design is  $b=\lambda_0=\lambda {n \choose t}/{w \choose t} $; and each point in $\Xc$ is contained in $\lambda_1$ blocks where $\lambda_1=\lambda {n-1 \choose t-1}/{w-1 \choose t-1}$~\cite{MacSlo}.

\begin{constr}\label{constr:FR}
Let $B_1,\ldots,B_b\in \Bc$ be the blocks and $x_1,\ldots, x_n\in \Xc$ the points of a  $t$-$(n,w,\lambda)$ design. Then the $n$ nodes of a FR code $C$ are given by the points of the design, i.e., a node $N_i$ contains  $\alpha\triangleq\lambda_1$ symbols given by  $N_i=\{j: x_i\in B_j\}$. Note that the cardinality of an intersection of any $s\leq t$ nodes are given by the numbers $\lambda_s$, and hence the cardinality of a union of any $s\leq t$ nodes can be easily derived by the inclusion-exclusion formula. Let $k,K$ be two integers such that $k\leq t$ and

\begin{equation}
\label{eq:frac_rep_URA}
|\bigcup_{i=1}^{k-1}N_i|<K\leq |\bigcup_{i=1}^{k}N_i|.
\end{equation}

Then we have an FR code over an alphabet of size $b$, with the property that there \emph{exists} a set of $d$ nodes which
can repair a failed node and from \emph{any} set of $k$ nodes one can reconstruct the original $K$ symbols.

Given a message vector $[m_1 \ m_2 \ \cdots m_K]$, we encode the message symbols first by using an $[b,K,b-K+1]$ MDS code to
produce $b$ coded symbols $(c_1, \ c_2 \cdots c_b)$ and then by employing the FR code based on the $t$-design to produce $n$
nodes  each containing $\lambda_1$ symbols.
\end{constr}

This family of FR codes based on $t$-designs is also an example of codes with uniform rank accumulation, and thus the bound of Theorem~\ref{thm:URA_bound} can be used here as well.  Thus, we have the following result.

\begin{thm}
 When FR codes based on a $t$-design obtained by Construction~\ref{constr:FR} are used as the local codes in Constructions~\ref{constr:all_symbol_MBR} and~\ref{constr:info_MBR}, then the resulting code with locality attains the bound of Theorem~\ref{thm:URA_bound} on minimum distance.
\end{thm}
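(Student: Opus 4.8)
The plan is to follow the same two-step template that proved Theorem~\ref{thm:all_symbol_MBR}, using Lemma~\ref{lem:useful} as the workhorse; the only genuinely new ingredient is verifying that the FR codes of Construction~\ref{constr:FR} are URA codes with a rank accumulation profile for which $P^{(\text{inv})}$ is well-defined. First I would record the rank accumulation profile of the $t$-design FR code. By Construction~\ref{constr:FR} each node $N_i$ is the set of blocks through the point $x_i$, so $|N_i| = \lambda_1 = \alpha$ and the intersection of any $s \leq t$ distinct nodes has cardinality $\lambda_s$, a number depending only on $s$ and not on which nodes are chosen. Hence by inclusion--exclusion the size of a union of any $j$ nodes, $j \leq t$, is a fixed number $U_j$ independent of the chosen set; for $j > t$ one argues (as in \cite{ElrRam,KamPraLalKum}) that the union has already saturated, so $U_j = U_t$ there. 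Because the final FR code is obtained by applying a $[b,K',b-K'+1]$ MDS outer code and then distributing symbols to nodes, the rank of any $j$ thick columns of its generator matrix equals $\min\{U_j,\,\textrm{(MDS dimension)}\}$, which again depends only on $j$. Setting $a_j = U_j - U_{j-1}$ (with $U_0 = 0$) exhibits the code as URA with profile $\{a_j\}$, and the MDS outer code guarantees that every partial sum $\rho_i = U_i$ is actually attained by the generator matrix.

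Next I would invoke the machinery already set up in the excerpt. Since the FR local code is URA, Theorem~\ref{thm:URA_bound} applies verbatim and gives the upper bound $d_{\min} \leq n - P^{(\text{inv})}(K) + 1$ for any code with these local codes, in particular for the codes produced by Constructions~\ref{constr:all_symbol_MBR} and~\ref{constr:info_MBR} with FR local codes substituted for MBR local codes. It therefore remains only to show the matching lower bound. For this I would form $G_{\text{\tiny BASIC}}$, the block-diagonal generator matrix of the disjoint union of the $t$ FR local codes (plus, in the information-locality case, a $\Delta\alpha$-column identity-like block for the extra Gabidulin symbols), and observe exactly as in Section~\ref{sec:MBR_locality_code_constn} that the smallest $\rho$ for which every $\rho$ thick columns of $G_{\text{\tiny BASIC}}$ have rank $\geq K$ equals $P^{(\text{inv})}(K)$: this is the standard "spread the $K$ units of rank as evenly as possible across blocks" computation, using that each block contributes rank according to the common profile $\{a_j\}$ and that $P^{(\text{inv})}$ is by definition the inverse of the partial-sum function $P$. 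Then Lemma~\ref{lem:useful}, applied with $\tilde J = K$ and $J = tK_L$ (respectively $J = tK_L + \Delta\alpha$) and with $G = G_{\text{\tiny BASIC}}$ over $\mathbb{F}_q$, yields that the concatenated code has minimum distance exactly $n - \rho + 1 = n - P^{(\text{inv})}(K) + 1$, matching the bound.

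The main obstacle, and the step deserving the most care, is the claim that the $t$-design FR code really is URA in the strong sense required, namely that the rank of $G|_{S_j}$ is the \emph{same} for \emph{every} $j$-subset $S_j$ of thick columns (not merely bounded above uniformly). The potential subtlety is double-counting in the inclusion--exclusion when $j$ exceeds $t$: one must check that once $j \geq$ some threshold the union of any $j$ nodes covers a fixed number of blocks (indeed all $b$ blocks, or a saturated subset), so that the profile $a_j$ is eventually $0$ and the partial sums plateau at the MDS dimension; and one must confirm that the MDS outer code is what forces each achievable $U_j$ to be realized as an actual rank rather than just an upper bound on column span. Granting this — which is exactly the content of the FR-codes-are-URA observation attributed to \cite{ElrRam} and used in \cite{KamPraLalKum} — the rest is a mechanical repetition of the proofs of Theorems~\ref{thm:all_symbol_MBR} and~\ref{thm:MBR_general} with the MBR profile \eqref{eq:MBR_rankacc} replaced by the FR profile $\{a_j\}$, and no new idea is needed.
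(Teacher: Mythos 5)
Your proposal is correct and follows essentially the same route as the paper, which in fact offers no detailed proof beyond the remark that $t$-design-based FR codes are URA codes, so that the block-diagonal-plus-Gabidulin argument of Lemma~\ref{lem:useful} and Theorem~\ref{thm:all_symbol_MBR} applies verbatim; your write-up simply fleshes out that observation. One small imprecision — the union of $j>t$ nodes need not equal the union of $t$ nodes in size, and can depend on the choice of nodes — is harmless, since for $j\geq k$ (with $k\leq t$ as in \eqref{eq:frac_rep_URA}) the rank is already capped at the MDS dimension $K$, so the thick-column rank is constant in $j$ exactly as your $\min\{U_j, K\}$ formula asserts.
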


\begin{figure} [t!]
\begin{center}
\includegraphics[width=3.8in]{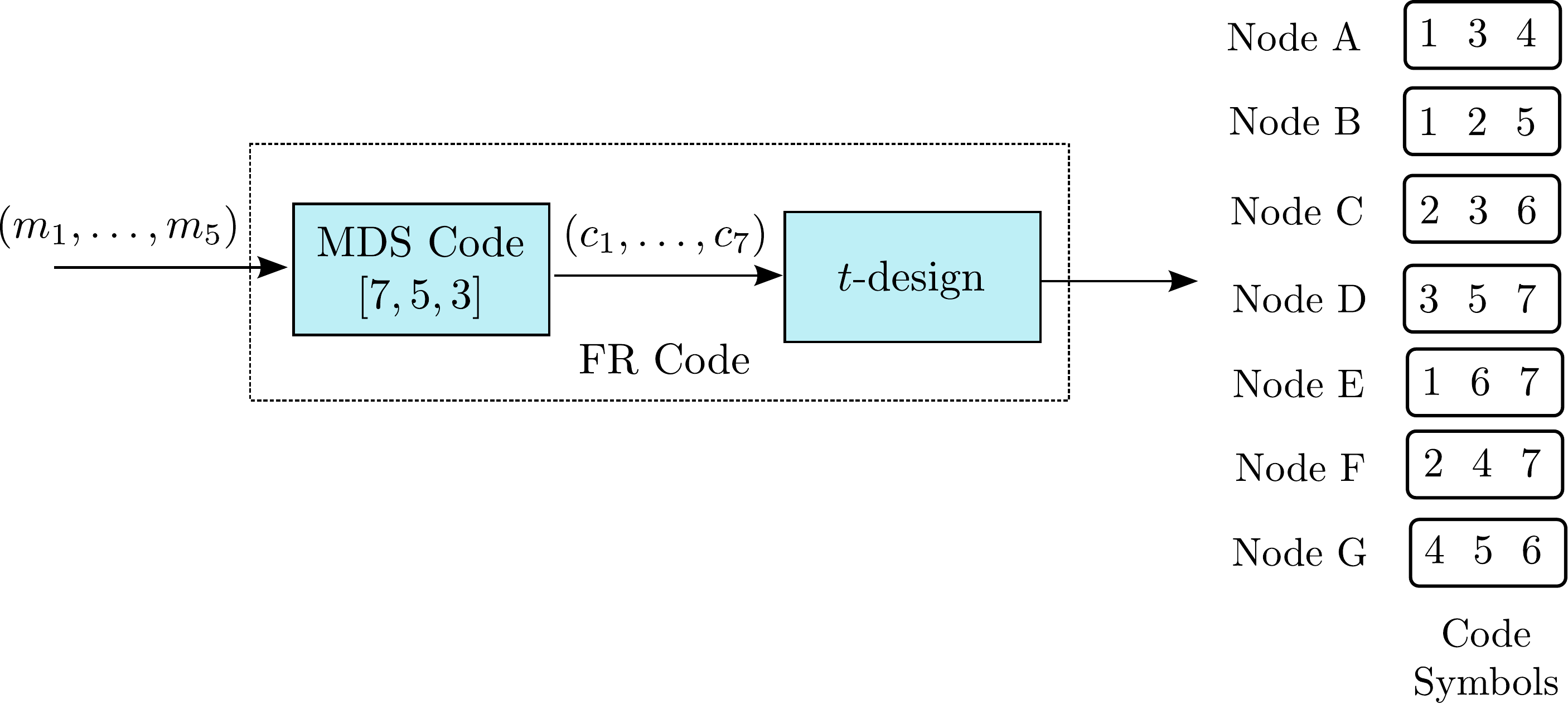}
\end{center}
\caption{Fractional Repetition Code based on $2-(7,3,1)$ design.}
\label{fig:fractional_repitition_code}
\end{figure}

\begin{figure} [t!]
\begin{center}
\includegraphics[height=1.5in]{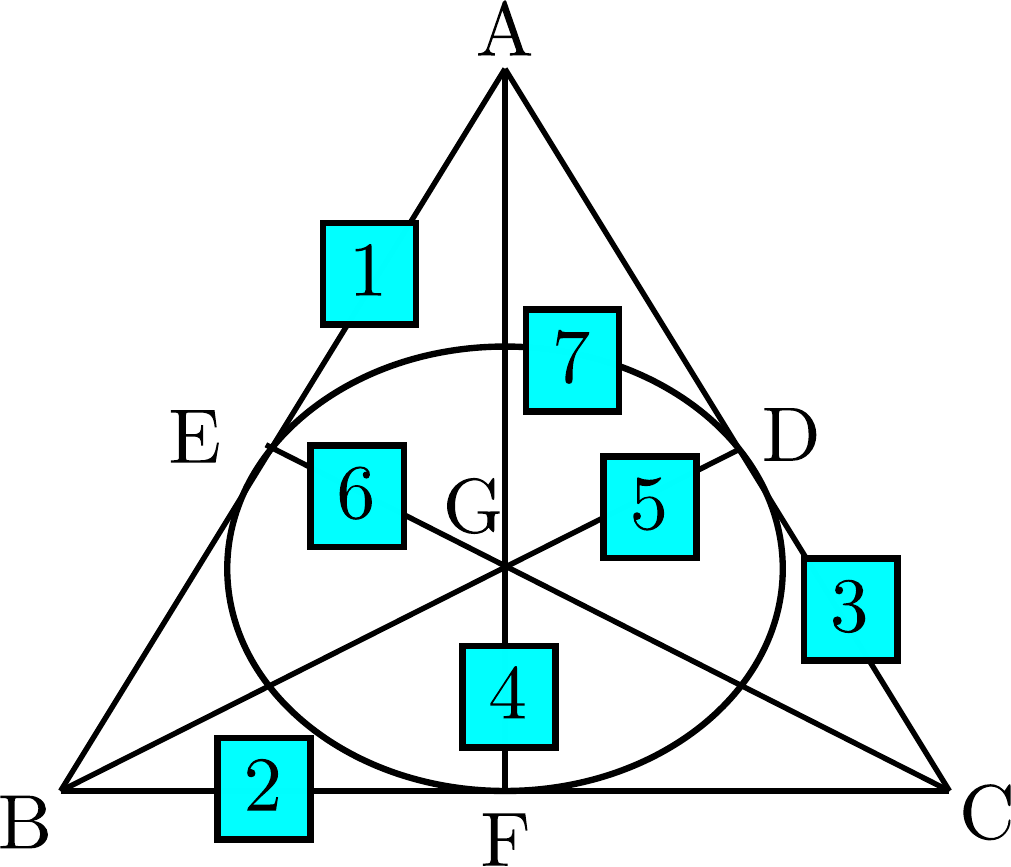}
\end{center}
\caption{Fano Plane, a $2-(7,3,1)$ design.}
\label{fig:transpose_plane}
\end{figure}

An example  of an encoding is shown in Fig.~\ref{fig:fractional_repitition_code}, where the encoding is done using $2$-$(7,3,1)$ design, also known as the Fano plane (see Fig.~\ref{fig:transpose_plane}). When we replace a local  MBR code with the FR code based on Fano plane in Fig.~\ref{fig:mbr_allsymbol}, we obtain a code with locality which has the optimal minimum distance.



\bibliographystyle{IEEEtran}
\bibliography{localMBRjoint}

\begin{thebibliography}{10}
\providecommand{\url}[1]{#1}
\csname url@samestyle\endcsname
\providecommand{\newblock}{\relax}
\providecommand{\bibinfo}[2]{#2}
\providecommand{\BIBentrySTDinterwordspacing}{\spaceskip=0pt\relax}
\providecommand{\BIBentryALTinterwordstretchfactor}{4}
\providecommand{\BIBentryALTinterwordspacing}{\spaceskip=\fontdimen2\font plus
\BIBentryALTinterwordstretchfactor\fontdimen3\font minus
  \fontdimen4\font\relax}
\providecommand{\BIBforeignlanguage}[2]{{%
\expandafter\ifx\csname l@#1\endcsname\relax
\typeout{** WARNING: IEEEtran.bst: No hyphenation pattern has been}%
\typeout{** loaded for the language `#1'. Using the pattern for}%
\typeout{** the default language instead.}%
\else
\language=\csname l@#1\endcsname
\fi
#2}}
\providecommand{\BIBdecl}{\relax}
\BIBdecl

\bibitem{GopHuaSimYek}
P.~Gopalan, C.~Huang, H.~Simitci, and S.~Yekhanin, ``{O}n the locality of
  codeword symbols,'' \emph{IEEE Trans. Inf. Theory}, vol.~58, no.~11, pp.
  6925--6934, Nov. 2012.

\bibitem{PraKamLalKum}
N.~Prakash, G.~M. Kamath, V.~Lalitha, and P.~V. Kumar, ``{O}ptimal linear codes
  with a local-error-correction property,'' in \emph{Proc. IEEE Int. Symp. Inf.
  Theory (ISIT)}, Cambridge, MA, Jul. 2012, pp. 2776--2780.

\bibitem{PapDim}
D.~S. Papailiopoulos and A.~G. Dimakis, ``{L}ocally repairable codes,'' in
  \emph{Proc. IEEE Int. Symp. Inf. Theory (ISIT)}, Cambridge, MA, Jul. 2012,
  pp. 2771--2775.

\bibitem{KamPraLalKum}
\BIBentryALTinterwordspacing
G.~M. {Kamath}, N.~{Prakash}, V.~{Lalitha}, and P.~V. {Kumar}, ``{C}odes with
  local regeneration,'' Nov. 2012. [Online]. Available: \url{arXiv:1211.1932}
\BIBentrySTDinterwordspacing

\bibitem{RawKoySilVis}
\BIBentryALTinterwordspacing
A.~S. Rawat, O.~O. Koyluoglu, N.~Silberstein, and S.~Vishwanath, ``{O}ptimal
  locally repairable and secure codes for distributed storage systems,'' Oct.
  2012. [Online]. Available: \url{arXiv:1210.6954}
\BIBentrySTDinterwordspacing

\bibitem{DimGodWuWaiRam}
A.~G. Dimakis, P.~B. Godfrey, Y.~Wu, M.~J. Wainwright, and K.~Ramchandran,
  ``{N}etwork coding for distributed storage systems,'' \emph{IEEE Trans. Inf.
  Theory}, vol.~56, no.~9, pp. 4539--4551, Sep. 2010.

\bibitem{RasShaKum_pm}
K.~V. Rashmi, N.~B. Shah, and P.~V. Kumar, ``{O}ptimal exact-regenerating codes
  for distributed storage at the {MSR} and {MBR} points via a product-matrix
  construction,'' \emph{IEEE Trans. Inf. Theory}, vol.~57, no.~8, pp.
  5227--5239, Aug. 2011.

\bibitem{ShaRasKumRam_rbt}
N.~B. Shah, K.~V. Rashmi, P.~V. Kumar, and K.~Ramchandran, ``{D}istributed
  storage codes with repair-by-transfer and nonachievability of interior points
  on the storage-bandwidth tradeoff,'' \emph{IEEE Trans. Inf. Theory}, vol.~58,
  no.~3, pp. 1837--1852, Mar. 2012.

\bibitem{MacSlo}
F.~J. MacWilliams and N.~J.~A. Sloane, \emph{{T}he Theory of Error-Correcting
  Codes}.\hskip 1em plus 0.5em minus 0.4em\relax Amsterdam: North-Holland,
  1983.

\bibitem{Gab}
E.~M. Gabidulin, ``{T}heory of codes with maximum rank distance,''
  \emph{Problems of Information Transmission}, vol.~21, pp. 1--12, July 1985.

\bibitem{OggDat}
F.~Oggier and A.~Datta, ``{S}elf-repairing homomorphic codes for distributed
  storage systems,'' in \emph{Proc. IEEE Int. Conf. Comput. Communications
  (INFOCOM)}, Shanghai, China, Apr. 2011, pp. 1215--1223.

\bibitem{SilRawVis}
N.~Silberstein, A.~S. Rawat, and S.~Vishwanath, ``{E}rror resilience in
  distributed storage via rank-metric codes,'' in \emph{Proc. 50th Annual
  Allerton Conf. on Communication, Control, and Computing (Allerton)},
  Urbana-Champaign, IL, Oct. 2012, pp. 1150 --1157.

\bibitem{ElrRam}
S.~El~Rouayheb and K.~Ramchandran, ``{F}ractional repetition codes for repair
  in distributed storage systems,'' in \emph{Proc. 48th Annual Allerton Conf.
  on Communication, Control, and Computing (Allerton)}, Urbana-Champaign, IL,
  Sep. 2010, pp. 1510 --1517.

\end{thebibliography}

\end{document}